\documentclass[preprint,12pt]{elsarticle}

\usepackage{graphicx}
\usepackage{amsmath}
\usepackage{amssymb}
\usepackage{amsthm}

\usepackage{url}
\usepackage{bbm}
\usepackage{accents}
\usepackage{booktabs}

\newtheorem{lemma}{Lemma}

\newtheorem{claim}{Claim}

\newtheorem*{theorem*}{Theorem}


\begin{document}

\begin{frontmatter}

\title{A model of discrete choice based on reinforcement learning under short-term memory}

\author{Misha Perepelitsa}

\date{\today}
\address{
misha@math.uh.edu\\
Department of Mathematics\\
University of Houston\\
4800 Calhoun Rd. \\
Houston, TX.}

\begin{abstract}

A family of models of individual  discrete choice are constructed by means of statistical averaging of choices made by a subject in a reinforcement learning process, where the subject has short, k-term memory span.  
The choice probabilities in these models combine in a non-trivial, non-linear way the initial learning bias and the experience gained through learning.  The properties of such models are discussed and, in particular, it is shown  that probabilities deviate from  Luce's Choice Axiom, even if the initial bias adheres to it.  Moreover, we shown that  the latter property is recovered as the memory span becomes  large.   

Two applications in utility theory are considered.  In the first, we use the discrete choice model to generate  binary preference relation on simple lotteries. We show that the preferences  violate transitivity and  independence axioms of expected utility theory. Furthermore, we establish the dependence of the preferences on frames, with risk aversion for gains, and risk seeking for losses. Based on these findings  we propose next a parametric model of choice based on the probability maximization principle, as a model for deviations from expected utility principle. To illustrate the approach  we apply it to the classical problem of demand for insurance.

\end{abstract}

\begin{keyword}
Discrete choice models  \sep Luce's choice axiom \sep reinforcement learning   \sep expected utility principle  



\end{keyword}

\end{frontmatter}

\begin{section}{Introduction}
 
The problem of choice is one of the fundamental problems in psychology, economics and behavioral biology.  The second half of the last century saw a rapid growth of theoretical works on this subject as well as increasing amount of  experimental data. 
 
In economics, the field  was dominated by expected utility theory (EUT), its critique based on the experimental evidence and its ramifications.
 EUT was put forward by Von Neumann and Morgenstern (1947) as a mathematical formalization of what one can call rational preferences  between contingent prospects. EUT is an axiomatic theory that starts out with postulates about  preferences among prospects. The postulates are  completeness,  transitivity,  continuity and independence (substitution) axioms. The theory derives a utility function $u$  which  assigns values to payoffs, and a random prospect $X$ is ranked according to  its expectation $\mathbb{E}[u(X)].$ 
 
Psychology differs from economics in the approach to the choice behavior  by assuming  a more general description of choices as being probabilistic and dependent on the set of alternatives offered to a subject.  
 The axiomatic treatment for discrete choice was undertaken in a seminal work by Luce (1959), who introduced a choice axiom (Luce's choice axiom) that postulates how the probability to select an alternative from one set is related to the probability to select this alternative from a larger set.  Luce's theory establishes existence of a value function $v$ on a finite set of alternatives $T$  such that the probability to select $i$ from set $S\subset T$ equals
\begin{equation}
\label{scale}
P_S(i) {}={} \frac{v(i)}{\sum_{j\in S}v(j)},\quad i\in S.
\end{equation}
Psychologically interpreted,  value function (scale ratio)  $v(i)$ is a subject's response strength for alternative $i.$ Under Luce's choice axiom, choice probabilities verify  the principle of independence of irrelevant alternatives of Arrow (1951), and thus, when the latter is normative,  becomes a reasonable assumption.



Utility function and ratio scale function provide convenient tools for analysis of decision making. Empirically, the values of these  functions can obtained by comparing pairs of alternatives.

While the first three axioms of  EUT are generally accepted, the last one,
independence axiom,  drew a significant amount of critique from the experimentalists, starting with Allais (1953). Over the years several alternative utility theories were proposed  that provide some variants of the expected utility without the independence axiom or with its weaker  version. Among them, the generalized expected utility of Machina (1982), weighted utility theory of Chew and MacCrimmon (1979),
the regret theory developed independently by   Bell (1982), Fishburn (1982),  Loomis and Sugden (1982), rank dependent utility theory of Quiggin (1982, 1993), and the dual utility theory by Yaari (1987). Kahneman and Tversky (1979, 1984, 1992) based on their experimental findings, introduced  framing effects, value functions, and  probability weights into the analysis  and incorporated them into the prospect theory that was later developed, using the
approach of the rank dependent utility theory, into the cumulative prospect theory.

In the theory of discrete choice Luce's choice axiom (LCA) is not a universal imperative either and there are situations where it  does not apply, as in the example provided by Debreu (1960).  This example was further developed by Tversky (1972) who attributed it to the similarity effect and proposed aspect theory as a refinement of the probabilistic decision making.  More detailed discussion of validity of LCA can be found in Luce (1977), or in a more recent review of Pleskac (2013).

Let us now return to the work of Luce (1959) and mention its another major  contribution, this time to the field of reinforcement learning theories. Learning theories concern with subjects building  their choice probabilities through experience, by
adapting their responses accordingly to received stimuli. Following Bush and Mosteller (1951, 1955) the learning models were typically formulated 
in terms of the choice probabilities to select option $i$ at time $n:\, P^n(i),$ which is determined as function of the probabilities from the last period and an outcome of some random event conditioned on the last choice.

Luce (1957), citing  works of Thurstone (1930) and Gulliksen (1953), argued that learning must be formulated in terms of the strength of response to each alternative, with the choice probabilities being dependent variables of responses, for example, through the relation \eqref{scale}. Several learning models of this types were proposed typically with a linear law between response for alternative $i$ at times $n$ and $n-1:$  
\begin{equation}
\label{MP}
U^n(i){}={}\mu U^{n-1}(i) + (1-\mu)u^{n}(i),
\end{equation}
 where $u^{n}(i)$ is the response to the stimuli from the prior selection. This approach has been widely accepted by scientific community, and used in such fields as mathematical biology, game theory, and  engineering, see for example Harley (1981), Roth and Erev (1995, 1998), Fudenberg and Levine (1998), Sutton and Barto (1998).

There seems to be an unanimous agreement about general principles of reinforcement learning and, naturally, one can turn to it as the tool in constructing models of individual choice. In contrast with axiomatic approaches, one starts with a model of learning. Its parameters should be found experimentally, but once fixed, the model serves as a ``decision automaton'' that generates a series of choices, or in mathematical language, a stochastic process on a suitable state space. The process is analyzed for convergence to some stationary process, as the number of learning periods increases.  In the stationary process the choice probabilities might settle at certain values that  change little  as more and more choices are being made, but we will not force this assumption.  In particular, the stationary process  can be consistent with the case when a  repeated series of positive experiences with a particular alternative increases the probability to select this alternative in a non-negligible way. What is needed for the theory is existence of well-defined  statistical averages for the choice probabilities. The latter  are computed and recorded as the probabilities for a discrete choice model.  The model can be effectively described as an  ``expected probability'' choice model.  However, except for some special cases that we will mention later, it is not an LCA-type model  \eqref{scale}, nor any expected utility-type model.
 
The properties of such models is the focus of this work.  To that end we proceed, first, with the mathematical framework. In section \ref{sec1} we derive exact formulas for the choice probabilities for a  finite set of alternatives, showing by this that our approach is computationally feasible. We introduce a  class of $k$--term learning models, where in the process of learning a subject accounts only for responses to the last $k$ stimuli obtained for his/her actions. For example, with $k=1,$ the response strength to alternative $i$ at period $n$ is given by 
\begin{equation}
\label{mod1}
U^n(i) {}={}U^0(i) + u^n(i),\quad i\in S,
\end{equation}
where $U^0(i)$ is a learning prior (bias) for alternative $i,$ and $u^n(i)$ is either zero, if alternative $i$ was not selected at period $n-1,$ or, if it was, is the response to the received (random) stimulus for that alternative. Notice that the contribution of the  learning prior $U^0(i)$ is not diminishing  in the course of learning (as $n$ increases) and it will enter into the formulas of  asymptotic  probabilities. Another important assumption is implicitly included in \eqref{mod1}. The initial priors $U^0(i)$ do not depend on the subset $S$ of alternatives offered to the subject. That is, at the beginning of the experiment the subject has choice probabilities vitrifying Luce's choice axiom. Modified by learning, they enter into an asymptotic, expected probability model, which in general, loses that property.

In section \ref{sec2}  we turn to applications and our choice here will be  on deterministic binary preferences, since they take a prominent place in economics. We consider an individual presented with a set $T$ of  $q$ alternatives. For every pair of alternatives $S=\{x,y\}\subseteq T,$ the individual constructs choice probabilities for $x$ and $y$ from set $S,$ according to a learning process described above, where for simplicity we assume that the individual has very short, one period, memory span ($k=1$).    The binary preference  can be derived from a probabilistic choice model in many different ways, most obvious being the trace relation which defines $x\succ y$ iff the probability to chose $x$ from set $\{x,y\}$ is greater than 1/2. Depending on the parameters of the model,  we observe  a wide range of behaviors. There are some extreme cases when the binary preferences are according to EUT, while, generically, the preferences are not transitive and violate independence axiom. Intransitivity, in particular, implies that the choice probabilities, from which the binary preferences were derived, violate Luce's choice axiom.  There is more to it however, as we show that the binary preferences are characterized by a ``framing effect'', with risk averse preferences for gains and risk seeking  for losses, similar to the preferences in prospect theory of Kahneman and Tversky (1979, 1984). The binary preferences also shown to detect persistently better alternatives, by adhering to first order stochastic dominance. 

Intransitivity of the preferences and  violations of  independence axiom are two phenomena that typically enter any set of empirical data. The fact that they are revealed in our choice model, combined with the fact that the model  is founded  on the  behavioral   principles, warrants the interest in the experimental verification of the model.  This work however is limited only to  presentation of the model and its properties, not to establishing its empirical validity.  In section \ref{sec3} we introduce a parametric variant of the expected probability model with a partial motivation  to facilitate the task of performing  statistical tests.

The material of sections \ref{sec1} and \ref{sec2}  serves as a motivation for  a model for ``deviations from the EU principle", which we describe in section \ref{sec3}. 
The model is best described as a mediator between two expected utility principles.  
One is   based on maximization of response $\mathbb{E}[u(X)].$ The other is  based on the minimization of   
\[
\mathbb{E}\left[\frac{1}{1+ e^{u(X)}}\right],
\]
the quantity related to the expected probability. One can notice the dependence of this type of preferences on a  ``frame" through the shift of scales from $u$ to $u+u_0,$ which can change the ordering of preferences.   Interestingly enough it also establishes higher risk aversion for gains, and risk seeking for losses, even if $u$ itself is risk averse (concave), the phenomenon that we have mentioned earlier.
A generic case, described by formulas \eqref{deu:1}--\eqref{max_prob} on page \pageref{deu:1}, combines two types of EU principles into one. This case, however, is not EU-type principle any longer.  At the end of section \ref{sec3} we apply the model of ``deviations from EUT" to a classical problem of determining  demand for insurance.

\end{section}

\begin{section}{K-period reinforcement learning choice model}
\label{sec1}

\subsection{Response}
All reinforcement learning models have three ingredients in common. The reinforcement schedule, the response, as a function of the stimulus, and the choice probabilities depending on the  response strength. We will follow the approach of Luce (1959), according to which a subject has a mental record of
responses to each choice alternative and updates them according to the realized reinforcement (stimulus) for a corresponding alternative. Then, the subject implements choice through a subject-specific  function that selects an alternative with a probability proportional to the response strength for this choice. 

Consider a succession of experiments in which a subject is offered a stimulus for an alternative he chooses.
Let $r_i^n$ be the reinforcement given to the subject at $n^{th}$ trail,  when alternative $i$ was chosen last. We assume that each $r_i^n$ is sampled from a random variable $R_i,$ independently from other alternatives and  independently from one period to another. Reinforcement is measured in the experiment-dependent units such as dollars, carrots, intensities of light signals, etc.
 
Let $U_i^n$ be the total response strength for alternative $i$  that subject has for this alternative at the end of trial $n.$ It expresses the cumulative effect of the past reinforcements for alternative $i$ on the subject's attitude toward this alternative on some internal scales.

{\it Reinforcement-to-Response law:}  we will assume that the total response strength is additive over the incremental response strengths from each reinforcement. That is, there is a subject specific function $u=u(r)$ that maps the reinforcement value $r$ into the subject's response scales such that 
\begin{equation}
\label{RRL}
U_i^n{}={}U^0_i {}+{}\sum_{j\in[0..k-1]} u(r_i^{n-j})/N(n,i),
\end{equation}
where we agreed that if alternative $i$ is not selected during period $j,$ then $u(r_i^j)=0.$ $N(n,i)$ is the count of the number of times during the last $k$ trials when alternative $i$ was selected.  A partial justification for the short-memory model are the findings of Kaheman and Trversky (1979) in the context of utility theory. They report that choices are governed by the increments in the subject's wealth rather than the total accumulated wealth. 
The law \eqref{RRL} expresses the balance of contributions of the default response $U_i^0$ and new experience to the total response strength, when the former has non-diminishing contribution. \eqref{RRL} can be thought as a balance between  the instinct and the experience, or the balance between things learned in childhood, or bias,  and the presents reinforcement.

{\it Response-to-Probability law:} choice probability $P_i^n$ for alternative $i$ is determined through the relation
\begin{equation}
\label{def:Prob_i}
P_i^n{}={}\frac{\Phi(U_i^n)}{\sum_{j=1}^n \Phi(U_j^n)},
\end{equation}
where $\Phi=\Phi(U)$ is a re-scaling of the subject's response values $U$ into the range of strictly positive numbers. The function is selected to be non-decreasing to preserve the ordering of responses.

\subsection{Statistical count}
Consider now an experiment in which a subject is responding according to the rule just described. An outside observer with a capacity for statistical computations will notice that the proportion of  times each alternative $i$ is selected becomes fixated with time and settles at certain positive number $\bar{P}_T(i).$ Appendix section gives the formal mathematical account of this type of asymptotic behavior together with the formulas for probabilities $\left\{\bar{P}_T(i)\right\}.$ When the experiment is repeated with $T$ replaced by any subset $S\subset T,$ it leads to probabilities $\left\{\bar{P}_S(i)\right\}$ for alternatives to be selected from $S.$ In this way we obtain a family of probability measures 
\begin{equation}
\label{def:P}
\bar{P}_S(R){}={}\frac{\sum_{j\in R}\bar{P}_S(j)}{\sum_{j\in S} \bar{P}_S(j)},\quad R{}\subset{}S,
\end{equation} on subsets of $T.$ 
We will refer to this set of probabilities as RL(k) choice model. The properties of \eqref{def:P} are the main interest of the paper.  Their significance, hypothetically, arises from the fact that the subject  may learn the statistics of self behavior and use it next time the choice is to be made..  Or, he/she might run a quick mental ``simulation" of the learning process and come up with the same choice probabilities even when presented with choice once.  Better yet, with the memory span $k=1,$ when  the response is based only on the last reinforcement,  the subject  can ``generalize"  other people' one-time experiences about the alternatives as his/her own, in arriving at the probabilities $\bar{P}_T.$

These considerations warrant the interest in theoretical analysis of the family of probabilities \eqref{def:P}.  We start with few simple observations. Consider a special case when  an experiment provides no reinforcement whatsoever. Then,  the choice probabilities 
\[
\bar{P}_S(R) = \frac{\sum_{j\in R}\Phi(U_j^0)}{\sum_{j\in S}\Phi(U_j^0)},
\]
which verify Luce's Choice Axiom. Another extreme case occurs when the memory span becomes increasingly large.
\subsection{Long-term-memory limit}
Consider model RL(k) with increasingly large values of $k.$
We will proceed informally by noticing that the averages of responses to reinforcement in the last $k$ trails in \eqref{RRL} converge by the law of large numbers to the mean:
\[
\sum_{j\in[0..k-1]} u(r_i^{n-j})/N(n,i) \rightarrow \mathbb{E}[u(R_i)],\quad i=1..q.
\]
Thus for large values of $k$ and $n,$ the total response strength for alternative $i,$ $U_i^n\approx \bar{U}_i{}={}U_i^0{}+{}\mathbb{E}[u(R_i)],$ i.e. they change little from trial to trial. In the limit we obtain constant response levels and  the corresponding choice probabilities become LCA probabilities with the value scales $\Phi(\bar{U}_i),\,i=1..q.$ This is almost  the expected utility principle. It is in fact agrees with it if the priors $U_i^0$ are all equal, or if they are proportional to the corresponding expected response strength  $\mathbb{E}[u(R_i)].$










\end{section}


\begin{section}{Binary preference for lotteries}
\label{sec2}
The probabilistic description of choice is more general than algebraic (deterministic), and so, there are many different ways in which the latter can be derived from the former. Given a set of choice probabilities one can, for example, introduce a preference relation for alternatives by postulating some relation between the corresponding probabilities $P_{\{i,j\}}(i)$ and $P_{\{i,j\}}(j).$ One of such preferences is called trace relation which defines $i\succeq j$ if and only if $P_{\{i,j\}}(i){}\geq{}P_{\{i,j\}}(j).$ It was shown by Luce (1959), that if the family of probabilities $P_T$ verifies LCA, then trace relation is a weak order relation.

In this section we consider the trace relation for alternatives  that are monetary payoffs contingent on random events, with objectively known probabilities. We will assume that the positive scale function $\Phi(u){}={}exp(u/\beta),$ for some positive parameter $\beta,$ the response strength function $u(s)$ equals to the payoff $s$, and denote by  $R_i$ random payoff to alternative $i,$ and the initial bias $U^0_i{}={}\mathbb{E}[u(R_i)].$

The rationale for making such selections is the following.  Function $\Phi$  is the function from the logit probability model, originated from random utility theory, see Marschak (1960), and it is customary used as a scale function in the models of learning, see for example Fudenberg and Levine (1998).  The response $u$ is a linear function. We could have selected $u=ar+b,\,a>0,$ as a generic approximation of arbitrary non-decreasing function, but in order to simplify the exposition we restrict analysis to  the case $a=1,\,b=0.$ 
It reasonable to assume that the initial bias $U^0_i$ is some stochastic characteristic of the random payoff $R_i,$ known a priori, and having the units of ``utility'' $u(s).$ Thus,  the choice of $U^0_i$ as the expected ``utility''. In section \ref{sec3} we consider slightly more general model. Lastly, we restrict the analysis only to the learning models with the shortest  memory span $k=1.$ 


\subsection{Intransitivity of binary preferences}
Let $X,Y$ be two lotteries. Let $\bar{P}_1,\,\bar{P}_2$ be the equilibrium probabilities constructed from  model  RL(1) with set $T$ of two alternatives. We say that $X\succ Y$ if and only if $\bar{P}_1>\bar{P}_2.$ Equivalently (see appendix), the preference is defined by the inequality
\begin{equation}
\label{X>Y}
\mathbb{E}\left[\frac{e^{\mathbb{E}[Y]/\beta}}{e^{\mathbb{E}[Y]/\beta}{}+{}e^{((\mathbb{E}[X]+X)/\beta)}}
\right]{}< {}
\mathbb{E}\left[
\frac{e^{\mathbb{E}[X]/\beta}}{e^{\mathbb{E}[X]/\beta}{}+{}e^{((\mathbb{E}[Y]+Y)/\beta)}}
\right].
\end{equation}
We define relation  $X\succeq Y$ when the inequality in \eqref{X>Y} is not strict, and $X\sim Y$ when the expectations are equal. Notice that if the alternatives are restricted so that they have the same expected payoffs $\mathbb{E}[X]{}={}const.,$ then binary preference $\succ$ corresponds to EU principle with  ``utility" $ \tilde{u}(s)=(1+ e^{s/\beta})^{-1}.$ We will come back to this property in section \ref{sec3}.

Our first finding is that relation $\succ$ is not transitive. The proof of this is postponed to the appendix as it based on some technical manipulations with the integrals.

The intransitivity of the trace relation in its turn implies that choice probabilities $\bar{P}_S,$ violate Luce's choice axiom, see Luce (1959).

Intransitivity of preferences is a severe obstacle in constructing any choice theory which limits their usefulness. However, RL(1) model  provides much more information than just the binary preferences. In fact,  any finite number of lotteries can be evaluated and and  unambiguous rankings can be constructed.  We will exploit this property in full in section \ref{sec3}.

\subsection{Independence axiom and Allais experiment}
The independence axiom for preferences states that if lottery $L_1\succ L_2$
and $L$ is any other lottery, then a complex lottery in which $L_1$ is mixed with $L$ in some proportion dominates $L_2$ mixed with $L$ in the same proportion. This is the axiom for expected utility that drew the earliest critique to Neumann-Morgenstein theory, in particularly by Allais (1953).

The axiom requires preference to be linear in the probability distribution. 
Quick look at the formula \eqref{X>Y} reveals that the binary preference relation depends in a non-linear way on the distribution 
of the random variables. It is not surprising that such preferences  violate the independence axiom of the expected utility theory. To illustrate this property we consider an Allais-type experiment. In what follows we say that lottery $L(a,b:p)$ pays $\$a$ with probability $p$ and $\$b$ with probability $1-p.$ We shorten notation to $L(c:1)$ to describe lottery with 100\% of $\$c$ payoff. 

We compare lottery $L(c:1)$ and $L(1,0:x)$ according to \eqref{X>Y}. The result is represented graphically in figure \ref{fig:1} on page \pageref{fig:1}. The red line divides the unit square into two parts: below the line, the certain bet $L(c:1)\succ L(1,0:x),$ above, $L(1,0:x)\succ L(c:1),$ and the line itself is the indifference curve (certainty equivalent curve). 

Similar to Allais experiment, we mix each lottery  with  80\% chance of lottery $L(0:1)$ which pays \$0 for sure. The new lotteries to compare are $L(c,0:0.2)$ and $L(1,0:0.2x),$ and the indifference curve is marked blue in figure  \ref{fig:1}. The independence axiom requires the same preference between the new lotteries as between the original. For the binary preferences in question, however, for all pairs $(c,x)$ between  two curves in figure \ref{fig:1} preferences are reversed.

\subsection{Losses and Gains}
\label{losses}

Consider now the situation when the wealth increments are framed as losses (negative) or gains (positive).  In this setting we are going to look at the problem of determining the certainty equivalent for probabilistic lotteries involving only positive or only negative payoffs. 

First, we consider lottery $L(1,0:x)$ that pays \$0 with probability $1-x$ and \$1 with probability $x.$  For  such lottery we find its certainty equivalent \$c according to \eqref{X>Y}. Figure \ref{fig:2} on page \pageref{fig:2} shows (blue line) the corresponding indifference curve, in the first quadrant.  Then, we consider lotteries $L(-1,0:x)$ that pay \$0 with probability $1-x$ and $-1$ with probability $x.$ The red line in the third quadrant is the indifference curve for losses.  For comparison, we draw on the same figure the expected payoff curve
$\mathbb{E}[L(1,0:x)],$ which is the same the lottery weight $x.$

As seen from the figure, the indifference curve for the binary preferences  lies above expected utility in the region of gains, and below it, for losses.  Thus, an agent is more risk averse, compared to the expected utility preferences, in gains, and more risk tolerant  in losses. This example is a generic fact as proved in the next
\begin{lemma} Let $X$ be a random payoff of a lottery.  Then, if $X\geq 0$ ($X\leq 0$), the certainty equivalent of $X$ is less (greater) than the expected utility $\mathbb{E}[X].$ 
\end{lemma}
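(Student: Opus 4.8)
The plan is to reduce the statement to a single convexity property of the logistic-type function $\phi(s)=(1+e^{s/\beta})^{-1}$. First I would write out the certainty-equivalent equation explicitly. Put $m=\mathbb{E}[X]$ and take the second lottery in \eqref{X>Y} to be the sure payoff $c$, so that $Y=c$ is deterministic and $\mathbb{E}[Y]=c$. After dividing numerators and denominators by the appropriate exponentials, the indifference $X\sim L(c:1)$ (equality in \eqref{X>Y}) becomes $g(c)=0$, where
\[
g(c)=\mathbb{E}\!\left[\frac{1}{1+e^{(m+X-c)/\beta}}\right]-\frac{1}{1+e^{(2c-m)/\beta}}.
\]
I would first record that $g$ is well behaved as a function of $c$: the first term is strictly increasing in $c$ while the second is strictly decreasing, so $g$ is strictly increasing, with limits $-1$ as $c\to-\infty$ and $+1$ as $c\to+\infty$. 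Hence the certainty equivalent $c=c(X)$ exists and is unique.

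The crux is to determine the sign of $g$ at $c=m$. Substituting $c=m$ collapses both terms and yields
\[
g(m)=\mathbb{E}[\phi(X)]-\phi(m),\qquad \phi(s)=\frac{1}{1+e^{s/\beta}}.
\]
So everything hinges on comparing $\mathbb{E}[\phi(X)]$ with $\phi(\mathbb{E}[X])$, i.e. on the curvature of $\phi$. A direct computation of $\phi''$ shows that $\phi$ is convex on $[0,\infty)$ and concave on $(-\infty,0]$, with the single inflection point at $s=0$. Therefore, if $X\geq 0$ the variable stays in the convex region and Jensen's inequality gives $\mathbb{E}[\phi(X)]\geq\phi(m)$, so $g(m)\geq 0$; if $X\leq 0$ it stays in the concave region and $\mathbb{E}[\phi(X)]\leq\phi(m)$, so $g(m)\leq 0$ (with strict inequalities whenever $X$ is nondegenerate).

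Finally I would combine this with the monotonicity of $g$. Since $g$ is strictly increasing and vanishes exactly at $c(X)$, the inequality $g(m)\geq 0=g(c(X))$ forces $c(X)\leq m$, which is the gains case, while $g(m)\leq 0$ forces $c(X)\geq m$, the losses case. I expect the main subtlety to lie entirely in the first step — correctly specialising the two-sided expectation in \eqref{X>Y} to a deterministic second argument and verifying the two monotonicity directions of the pieces of $g$ — since once the problem is reduced to $\mathbb{E}[\phi(X)]$ versus $\phi(\mathbb{E}[X])$, the conclusion is governed by the elementary fact that $\phi$ changes convexity exactly at the origin, which is precisely the threshold separating gains from losses.
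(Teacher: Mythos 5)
Your proof is correct and takes essentially the same approach as the paper: both set up the indifference equation from \eqref{X>Y} with a deterministic second argument, use the opposite monotonicities in $c$ of the two sides (your strictly increasing $g(c)$ is exactly the paper's $p(c)-q(c)$), and then decide the sign at $c=\mathbb{E}[X]$ by Jensen's inequality, using that $(1+e^{x/\beta})^{-1}$ is convex on $[0,\infty)$ and concave on $(-\infty,0]$. Your only addition is the explicit verification, via the limits $g(c)\to\mp 1$, that the certainty equivalent exists and is unique, which the paper leaves implicit.
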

\begin{proof}
The certainty equivalent $c$ for lottery $X$ in the binary preferences is defined by the equation
\[
\mathbb{E}\left[ \frac{e^{c/\beta}}{e^{c/\beta}+e^{(\mathbb{E}[X] +X)/\beta}}\right]{}={}\frac{e^{\mathbb{E}[X]/\beta}}{e^{\mathbb{E}[X]/\beta} + e^{2c/\beta}}.
\]
Consider first the gains: $X\geq0.$ Let $p(c)$ denote the left-hand side of the equation, and $q(c)$ the right-hand side. We have: $p'(c)>0,$ and  $q'(c)<0.$  Moreover,
\[
p(\mathbb{E}[X]){}={}\mathbb{E}\left[ \frac{1}{1+e^{X/\beta}}\right]\geq \frac{1}{1+e^{\mathbb{E}[X]/\beta}}=q(\mathbb{E}[X]),
\]
because function $(1+e^{x/\beta})^{-1}$ is convex for $x\geq0.$ Thus, the point of intersection, $c,$ of graphs of $p$ and $q$ is not greater  than $\mathbb{E}[X].$ 

The other case is proved analogously, using the fact that $(1+e^{x/\beta})^{-1}$ is concave for $x\leq 0.$
\end{proof}

It should be noted that the situation in figure \ref{fig:2} applies only to two-valued lotteries described there. Since \eqref{X>Y} is not an expected utility no conclusions about other types of lotteries can be drawn from the graph of the certainty equivalents of such lotteries. In particular, the type of convexity of the certainty equivalent curve can not be used to characterize \eqref{X>Y} preferences in the relation to the attitude toward risk.

\subsection{First order stochastic dominance}
We will show in this section that an individual using  binary preferences \eqref{X>Y} can detect the prospect with persistently better payoffs than the other. This concept is formalized as first order stochastic dominance. 

Consider two lotteries $X$ and $Y.$ It is said that  $X$ stochastically dominates $Y,$ if for any  non-decreasing function $u$ and any non-increasing function $v,$  
\[
\mathbb{E}[u(Y)] \leq \mathbb{E}[u(X)],\quad \mathbb{E}[v(X)] \leq \mathbb{E}[v(Y)].
\]
\begin{lemma} If $X$ stochastically dominates $Y,$ then $X \succeq Y$ in the binary preferences \eqref{X>Y}.
\end{lemma}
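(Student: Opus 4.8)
The plan is to rewrite the defining inequality \eqref{X>Y} in a normalized form and then split the comparison into a purely distributional step, handled by stochastic dominance, and a mean-shift step, handled by monotonicity. First I would set $\delta=\mathbb{E}[X]-\mathbb{E}[Y]$ and introduce the strictly decreasing function $f(t)=(1+e^{t/\beta})^{-1}$. Dividing numerator and denominator inside the left expectation by $e^{\mathbb{E}[Y]/\beta}$ and inside the right expectation by $e^{\mathbb{E}[X]/\beta}$, the two sides of \eqref{X>Y} collapse to $\mathbb{E}[f(\delta+X)]$ and $\mathbb{E}[f(-\delta+Y)]$ respectively, so that the claim $X\succeq Y$ becomes the single inequality $\mathbb{E}[f(\delta+X)]\le\mathbb{E}[f(-\delta+Y)]$.

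The key preliminary observation is that stochastic dominance applied to the non-decreasing function $u(t)=t$ already gives $\mathbb{E}[Y]\le\mathbb{E}[X]$, hence $\delta\ge0$; this sign is what makes the two elementary steps reinforce one another. Next I would apply the dominance hypothesis to $v(t)=f(\delta+t)$, which is non-increasing since $f$ is decreasing and a shift by a constant preserves monotonicity. The definition of first order stochastic dominance then yields $\mathbb{E}[f(\delta+X)]\le\mathbb{E}[f(\delta+Y)]$. Finally, because $\delta\ge0$ and $f$ is decreasing, we have the pointwise bound $f(\delta+Y)\le f(-\delta+Y)$, and taking expectations gives $\mathbb{E}[f(\delta+Y)]\le\mathbb{E}[f(-\delta+Y)]$. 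Chaining these produces $\mathbb{E}[f(\delta+X)]\le\mathbb{E}[f(\delta+Y)]\le\mathbb{E}[f(-\delta+Y)]$, which is exactly $X\succeq Y$.

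I do not expect a genuine obstacle here; the only point requiring care is the bookkeeping in the normalization, namely verifying that the mean terms recombine so that the left side carries the shift $+\delta$ against the distribution of $X$ while the right side carries $-\delta$ against the distribution of $Y$. Once that asymmetry is recorded correctly, the sign $\delta\ge0$ guarantees that both the dominance comparison and the deterministic shift push in the same direction, so no convexity or second-order information about the lotteries is needed—only first order dominance together with the monotonicity of $f$. It is worth noting that this is precisely why the argument cannot be upgraded to a strict preference $X\succ Y$ without extra hypotheses: if $X$ and $Y$ have the same distribution both inequalities are equalities, so the lemma can only assert the weak relation $\succeq$.
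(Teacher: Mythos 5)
Your proof is correct and follows essentially the same route as the paper's: both normalize \eqref{X>Y} to the comparison $\mathbb{E}[f(\delta+X)]\le\mathbb{E}[f(-\delta+Y)]$ with $f(t)=(1+e^{t/\beta})^{-1}$ and $\delta=\mathbb{E}[X]-\mathbb{E}[Y]\ge 0$, and then chain a mean-shift inequality with a dominance inequality for a non-increasing function. The only difference is the order of the two steps (the paper's intermediate quantity is $\mathbb{E}[f(-\delta+X)]$, yours is $\mathbb{E}[f(\delta+Y)]$), and you make explicit the observation, left implicit in the paper, that $\delta\ge0$ itself follows from dominance applied to $u(t)=t$.
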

\begin{proof}
Recall that $X$ is better than $Y$ if 
\[
\mathbb{E}\left[ \frac{e^{\mathbb{E}[Y]/\beta} }{e^{\mathbb{E}[Y]/\beta} + e^{(\mathbb{E}[X]+X)/\beta}}\right] \leq \mathbb{E}\left[ \frac{e^{\mathbb{E}[X]/\beta}}{e^{\mathbb{E}[X]/\beta} + e^{(\mathbb{E}[Y]+Y)/\beta}}\right].
\]
This inequality can be proved by using the definition of the stochastic dominance:
\begin{multline*}
\mathbb{E}\left[ \frac{e^{\mathbb{E}[Y]/\beta} }{e^{\mathbb{E}[Y]/\beta} + e^{(\mathbb{E}[X]+X)/\beta}}\right]{}={}
\mathbb{E}\left[ \frac{1}{1 + e^{(\mathbb{E}[X]-\mathbb{E}[Y])/\beta)}e^{X/\beta}}\right]\\
{}\leq{} \mathbb{E}\left[ \frac{1}{1 + e^{(-\mathbb{E}[X]+\mathbb{E}[Y])/\beta}e^{X/\beta}}\right]
{}\leq{} \mathbb{E}\left[ \frac{1}{1 + e^{(-\mathbb{E}[X]+\mathbb{E}[Y])/\beta}e^{Y/\beta}}\right]\\
{}={}\mathbb{E}\left[ \frac{e^{\mathbb{E}[X]/\beta}}{e^{\mathbb{E}[X]/\beta} + e^{(\mathbb{E}[Y]+Y)/\beta}}\right].
\end{multline*}
\end{proof}

\end{section}

\section{A model for deviations from EU principle}
\label{sec3}
In this section we slightly generalize the arguments presented above to introduce a parametric model for deviations of choices from EUT.  Suppose that we are to choose among lotteries $X_1,..,X_q$ offering random monetary payoffs. Let $u$ be a subject utility function, and let the subject form the choice probabilities based on RL(1) model with learning priors $U^0_i{}={}\mathbb{E}[u(X_i)]$ and using $u(X_i)$ to measure his/her responses.  
We introduce parameter $\alpha>0$ to measure  deviations from EU into the total response function, so that 
\begin{equation}
\label{deu:1}
U_i{}={}  \mathbb{E}[u(X_i)]{}+{}\alpha u(X_i),\quad \alpha>0,
\end{equation}
if $i$ was selected last, and 
\[
U_i{}={}  \mathbb{E}[u(X_i)],
\]
otherwise.

With the positive scales function $\Phi(u){}={}e^{u/\beta},$
equilibrium choice probabilities $\{\bar{P}(i)\}_{i=1}^q$ are determined by formula \eqref{formula:mu} from appendix:
\begin{equation}
\label{big:formula}
\bar{P}(i){}={}Ke^{U^0_i/\beta}\left(\mathbb{E}\left[
\frac{K_0}{K_0{}+{}e^{(U^0_i+\alpha u(R_i)/\beta}-e^{U^0_i/\beta}}\right]\right)^{-1},\quad  i=1..q,
\end{equation}
and $K_0 = \sum_i e^{U^0_i/\beta},$ and $K$ is a positive constant.

The choice is lottery  $X_{i_0}$ which maximizes the probability:
\begin{equation}
\label{max_prob}
\bar{P}(i_0){}={}\mbox{max}\{\bar{P}(1),..,\bar{P}(q)\}.
\end{equation}

Formula \eqref{big:formula} has two parts.  First, take $\alpha=0.$ This corresponds to the choices made on the basis of EU principle according to $\mathbb{E}[u(R_i)].$ 
 Now let $\alpha,\beta\to\infty,$ and at the same time fix the ratio $\alpha/\beta=1.$ In this case the subject is not using priors, but only one-period experience. This is also described by another  EU principle based on the minimization of 
\begin{equation}
\label{limit:1}
\mathbb{E}\left[\frac{1}{1+ e^{u(R_i)}}\right].
\end{equation}
To illustrate the properties such choices, consider, for example,  that the subject uses logarithmic $\log(1+s)$ function as the response increments (reinforcements), but it is shifted so that the response is  psychologically ``framed" at some reference value $u_0,$ so that 
$u<u_0$ is considered a loss and $u>u_0$ as gains. Thus,
\begin{equation}
\label{u:1}
u(s) = \log(1+s) - u_0,
\end{equation}
and without loss of generality we  assume that $u_0=\log(1+ s_0),$ for some positive $s_0.$ We will consider simple lotteries with payoffs in the interval $[0,2s_0]$ we determine how the subjects values them by the certainty equivalent. First we consider a lottery in the ``gains'' region of $[s_0,2s_0].$ The lottery pays $\$2s_0$ with probability $p$ and $\$s_0$ with probability $1-p.$ We parametrize $p$ by variable $x$ expressed in the units of utility:
\[
p{}={}\frac{x-u(s_0)}{u(2s_0)-u(s_0)}, \quad x\in[u(s_0),u(2s_0)].
\]
For such lottery we determine its certainty equivalent $c$ using \eqref{limit:1}, \eqref{u:1}, and mark point  $(c,s)$ on the graph in figure \ref{fig:3} on page \pageref{fig:3}. For comparison we plot the certainty equivalents for the lotteries but using expected utility $\mathbb{E}[u(X)],$ where $u$ is from \eqref{u:1}. In the last case the certainty equivalent curve is simply the graph of $u(s).$

For the region of losses, we repeat the construction using lottery that pays $\$s_0$ with probability $p,$ and $\$0$ with $1-p,$ where 
\[
p{}={}\frac{x-u(0)}{u(s_0)-u(0)},\quad x\in[u(0),u(s_0)].
\]
Figure \ref{fig:3} shows  certainty equivalents in this case as well. The certainty equivalent curve is below the graph of utility for losses, and above that for gains.

In between these two extremes, decision-making based on \eqref{big:formula}--\eqref{max_prob} shows deviations from the EU principle, of the similar kind as was discussed in the previous sections, on an example of linear utility $u(s)=s.$


Now we apply choice model \eqref{deu:1}--\eqref{max_prob} to  the following classical problem.
\subsection{Demand for insurance}
Consider a person contemplating purchasing an insurance against a loss of $\$\Delta$ that might occur next year, with objectively known probability $p.$
The next year earnings will  be $\$y.$ An insurance company offers  protection against the loss with actuarially fair premium $\delta =p\Delta.$ We assume that the person can purchase any level of insurance $\$a\Delta$ for the price of $\$a\delta,$ with  $a\in[-1,2].$ We assuming here that the person can overprotect ($a>1),$ or can actually borrow  cash on the promise to return a part of the loss if it occurs ($a<0$). In this set up we're dealing with a simple lottery $Y_a$ described in table \ref{tab:1}.

\begin{table}
\centering
\begin{tabular}{@{}ll@{}}
\toprule
payoff & probability\\
\hline
$y-ap\Delta$                        & $p$ \\
$y-\Delta + (1-p)a\Delta$      & $1-p$ \\
\bottomrule
\end{tabular}
\caption{Values of lottery $Y_a.$}
\label{tab:1}
\end{table}

Consider model \eqref{deu:1} with $\alpha=0,$ with logarithmic utility $u(s):$
\[
u(s)=\log(4+c)-\log8.
\]
It is an EU principle of maximization of $\mathbb{E}[u(Y_a)].$  For a concave utility $u(s),$ the solution is always $a=1,$ for any level of income $y,$ probability of loss $p,$ and loss $\Delta>0.$

Consider model \eqref{deu:1} with $\alpha/\beta =0.4,$ $\alpha,\beta\to+\infty,$  and same $u(s).$ Notice that  income $Y_a<4$ is treated as a loss and $Y_a>4$ as a gain.  
 We consider here the levels of income $y\in[0,10],$ and loss $\Delta =2,$ comparable to the income. 
 The selection is now based on the minimization of a functional similar to \eqref{limit:1}. The solution is shown in figure \ref{fig:4} on page \pageref{fig:4}. The figure  shows the amount of insurance $a$ one buys, as a function of the level on income $y$ and the loss probability $p.$ Due to different risk attitude for losses and gains the decision depends on the values of $y$ and $p.$ Notice that among all possible values of $a\in[-1,2]$ only three are being selected: $a=-1,1$ and $2.$ That is, the choice undergoes phase transition in $(y,p)$ values.

To illustrate the selection by non-EU choice model, take $\alpha = 0.4,$ $\beta=1.$ The  model applies only to finite number of alternatives. In fact, it depends non-trivially on the number of alternatives through parameter $K_0$ in \eqref{big:formula}.

We will give a person the choice between levels of insurance from $a=-1$ to $a=2$ with increment of $0.5,$ totaling to $q=7$ choices. Figure \ref{fig:5} shows which one is selected depending on $y$ and $p.$ Notice that again that the choice is mostly between three values $a=-1,1,2,$ similar to that of the second case, when $y>4.$ The same values, somewhat symmetrically appear in the region $y<4,$ and for $y<1$ solution becomes $a=1,$ as in the risk averse case.  Thus the model shows deviations from EUT ($a=1$) in the middle section of the figure, around a point of framing, the width of this region depending on the parameters of the model.

\begin{section}{Appendix}

\subsection{Model RL(k)}

In this section we will give a proof that in the course of the learning according to the process described section \ref{sec1}, the choice probabilities settle at some  equilibrium values and provide the formulas for them. To simplify the presentation but not the generality we will assume that choices are made from the set of all $n$ alternatives.

The learning process can be described as a Markov chain on the finite state space $S^k$ of $k$ most recent alternatives that a subject has selected, i.e.,
\[
S^k{}=\left\{
(i_1,..,i_k)\,:\, i_j{}\in{}1..n
\right\}.
\]
We denote the stochastic process as $X^n{}={}(i^n_1,..,i^n_k).$ From the formula \eqref{RRL} we see that the state $X^{n+1}$ is completely determined from the current state $X^n,$  i.e. $\{X^n\}_{n=0}^{\infty}$ is a Markov chain.

We will proceed with the computation of the transition probabilities from the state $\bar{i}=(i_1,..,i_k)$ to the state $\bar{m}=(m_1,..,m_k),$ that we denote by $p(\bar{m}:\bar{i}).$ This probability is zero unless $m_2=i_1,$ $m_3=i_2,..,m_k=i_{k-1}.$ In the remaining  cases, according to \eqref{RRL} and \eqref{def:Prob_i}
for $j=1..n,$
\begin{equation}
\label{def:trans_p}
p(j,i_1,i_2,..,i_{k-1}:i_1,..,i_k){}={}
\mathbb{E}\left[
\frac{\Phi(U_j)}{\sum_{l=1}^n \Phi(U_l)}
\right]{}>{}0,
\end{equation}
where 
\[
U_l{}={}U^0_l {}+{}\sum_{l\in[1..k]} u(R_{i_l})/N(l,i_1,..,i_k,R_{i_1},..,R_{i_k}),
\]
and $N(l,i_1,..,i_k,r_{i_1},..,r_{i_k})$ is a random variable that counts the number of times alternative $l$ has been selected, given the last $k$ selections $i_1,..,i_k$ and the last $k$ reinforcements $r_{i_1},..,r_{i_k}.$
The expectation in \eqref{def:trans_p} is with respect to the joint distribution of independent random variables $(R_{i_1},..,R_{i_k}).$

The Markov  chain $\{X^n\}_{i=0}^{\infty}$ is irreducible and all states are ergodic, see the monograph of Feller (1957) for the theory Markov chains.  This implies that the distribution of $X^n$ converges to an invariant measure on $S^k$ that we denote my $\mu.$
The probability that alternative $i$ has been selected last is computed from $\mu$ by the formula
\[
\bar{P}_T(i){}={}\sum_{i_2,..,i_k\in[1..n]} \mu(i,i_2,..,i_k).
\]
For the learning model with $k=1$ computation of $\{\bar{P}_T\}$ is somewhat simplified. The details are presented below as this is the case of the principle interest of the paper. 

For $k=1,$ $S^1{}={}\{1,..,n\}$ is simply the set of alternatives, and  
the transition probabilities equal 
\[
p(j:i) {}={} \mathbb{E}\left[
\frac{\Phi(U_j)}{\sum_{l=1}^n \Phi(U_l)}
\right]{}>{}0,
\]
where 
\[
U_l{}={}\left\{
\begin{array}{ll}
U^0_i {}+{} u(R_{i}), & l=i,\\
U^0_l, & l\not=i.
\end{array}\right.
\]
The vector of equilibrium measure $\mu=(\mu(1),..,\mu(n))^t$ is determined from the linear system
\begin{equation}
\label{mu:gen}
\mu {}={}M\mu,
\end{equation}
where the elements of the transition matrix $M$ equal $M_{j,i}{}={}p(j:i).$ The $i^{th}$ equation in this system reads:
\begin{multline*}
\mu(i){}={}\sum_{k\not=i}\mu(k)\mathbb{E}\left[
\frac{\Phi(U^0_i)}{\sum_{l\not =k} \Phi(U^0_l){}+{}\Phi(U^0_k+u(R_k))}\right]\\{}+{}
\mu(i)\mathbb{E}\left[
\frac{\Phi(U^0_i+u(R_i))}{\sum_{l\not =i} \Phi(U^0_l){}+{}\Phi(U^0_i+u(R_i))}
\right].
\end{multline*}
It can be rearranged as
\begin{multline*}
\mu(i){}={}\sum_{k}\mu(k)\mathbb{E}\left[
\frac{\Phi(U^0_i)}{\sum_{l\not =k} \Phi(U^0_l){}+{}\Phi(U^0_k+u(R_k))}\right]{}+{}
\\{}+{}
\mu(i)\mathbb{E}\left[
\frac{\Phi(U^0_i+u(R_i))-\Phi(U^0_i)}{\sum_{l\not =i} \Phi(U^0_l){}+{}\Phi(U^0_i+u(R_i))}
\right],
\end{multline*}
or as
\begin{multline*}
\mu(i)(\Phi(U^0_i))^{-1}\mathbb{E}\left[
\frac{\sum_l \Phi(U^0_l) }{\sum_{l\not =i} \Phi(U^0_l){}+{}\Phi(U^0_i+u(R_i))}\right]\\
{}={}
\sum_{k}\mathbb{E}\left[
\frac{\mu(k)}{\sum_{l\not =k} \Phi(U^0_l){}+{}\Phi(U^0_k+u(R_k))}\right].
\end{multline*}
The right-hand side of this equation is independent of $i;$ we denote it by $K.$ Also we assign $K_0{}={}\sum_{l} \Phi(U^0_l).$ Then,
 the last equation provides a formula
\begin{equation}
\label{formula:mu}
 \mu(i){}={}K\Phi(U^0_i)\left(\mathbb{E}\left[
\frac{K_0}{K_0{}+{}\Phi(U^0_i+u(R_i))-\Phi(U^0_i)}\right]\right)^{-1},\quad  i=1..q.
\end{equation}
The set of equilibrium choice probabilities $\{\bar{P}_T\}$ is the same as measure $\mu.$
If only two alternatives are present, $T=\{1,2\},$ \eqref{mu:gen} can be reduced to a single equation, since $\mu(1)+\mu(2)=1:$
\begin{equation}
\label{binary}
\mu(1)\mathbb{E}\left[
\frac{\Phi(U^0_2)}{\Phi(U^0_2){}+{}\Phi(U^0_1+u(R_1))}
\right]{}={}
\mu(2)\mathbb{E}\left[
\frac{\Phi(U^0_1)}{\Phi(U^0_1){}+{}\Phi(U^0_2+u(R_2))}
\right],
\end{equation}
which we use in applications to compute ratio $\mu(1)/\mu(2).$

\subsection{Intransitivity}

\begin{lemma} For the binary preferences defined in \eqref{X>Y}, there are lotteries $X,Y,Z$ such that $X\succ Y,$ $Z\succ X$ but $Y\succ Z$.
\end{lemma}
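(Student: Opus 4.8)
The plan is to exploit that the relation, although close to a transitive mean-based ordering, carries a genuinely non-separable cyclic component that can be isolated and made to control the sign of the comparison around a triangle of lotteries.

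First I would rewrite \eqref{X>Y} in terms of the logistic sigmoid $\sigma(z)=(1+e^{z})^{-1}$. Dividing numerator and denominator in each expectation by its constant factor, the condition $X\succ Y$ becomes $\phi(X,Y)<0$, where $\phi(X,Y)=\mathbb{E}[\sigma((\mathbb{E}[X]+X-\mathbb{E}[Y])/\beta)]-\mathbb{E}[\sigma((\mathbb{E}[Y]+Y-\mathbb{E}[X])/\beta)]$. A direct check gives the antisymmetry $\phi(Y,X)=-\phi(X,Y)$, so on any finite set $\succ$ is a tournament and intransitivity is equivalent to a $3$-cycle $X\succ Y\succ Z\succ X$, i.e. $\phi(X,Y)$, $\phi(Y,Z)$, $\phi(Z,X)$ all strictly negative. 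I would also record the benign special case: when $\mathbb{E}[X]=\mathbb{E}[Y]$ the shifts cancel and $\phi(X,Y)=\mathbb{E}[\sigma(X/\beta)]-\mathbb{E}[\sigma(Y/\beta)]$ is a difference of single-lottery functionals, hence transitive. Thus any cycle must use lotteries with distinct means, and the non-transitivity is driven by the coupling of the mean-gaps to the curvature of $\sigma$.

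The engine is the cyclic sum $\Omega=\phi(X,Y)+\phi(Y,Z)+\phi(Z,X)$: for any relation representable by a single functional this sum vanishes, so a nonzero $\Omega$ already signals non-separability. I would compute $\Omega$ in the large-$\beta$ regime, expanding $\sigma(zt)=\tfrac12-\tfrac{zt}{4}+\tfrac{(zt)^{3}}{48}-\cdots$ with $t=1/\beta$. To leading order $\phi(X,Y)=-\tfrac{3t}{4}(\mathbb{E}[X]-\mathbb{E}[Y])+O(t^{3})$, and the mean terms telescope to zero around the triangle; carrying the expansion one step further, the surviving part of $\Omega$ is a combination of the second and third moments of $X,Y,Z$ that is generically nonzero — for instance, taking the three means to agree up to order $t^{2}$ leaves $\Omega$ proportional to $t^{5}$ times a linear form in the second moments, which I can make strictly negative by giving one lottery a strictly smaller spread than the other two.

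To turn a nonzero curl into an actual cycle I would use that the means enter as a gradient field: on a triangle any antisymmetric edge function splits uniquely into a gradient part plus a cyclic-symmetric part equal to $\Omega/3$ on each directed edge. Since the dominant, transitive, mean-driven contribution to $\phi$ is exactly of gradient type, I would choose the three mean-gaps (a solvable finite-dimensional linear/implicit-function step that does not alter $\Omega$) so as to cancel this separable part; each of $\phi(X,Y),\phi(Y,Z),\phi(Z,X)$ is then reduced to $\Omega/3<0$ up to higher-order corrections, yielding the three strict inequalities simultaneously for all sufficiently large $\beta$. An equivalent and more concrete route, matching the paper's remark about ``technical manipulations with the integrals,'' is to fix explicit two- or three-point lotteries $X,Y,Z$ and a value of $\beta$ guided by this analysis and verify the three finite-sum inequalities in closed form. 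The delicate point I expect to be the main obstacle is precisely this balancing: because $\succ$ coincides with the transitive mean-order to leading order and is exactly transitive on equal-mean lotteries, the cyclic component is subdominant, so one must cancel the larger separable terms carefully and confirm that all three strict inequalities survive with the small margins that remain.
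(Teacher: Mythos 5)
Your plan is sound, and it takes a genuinely different route from the paper. Both arguments rest on the same structural observation: writing \eqref{X>Y} as $\phi(X,Y)<0$ with $\phi(X,Y)=\mathbb{E}[\sigma((\mathbb{E}[X]+X-\mathbb{E}[Y])/\beta)]-\mathbb{E}[\sigma((\mathbb{E}[Y]+Y-\mathbb{E}[X])/\beta)]$, one sees that on equal-mean lotteries $\phi$ collapses to a difference of the single functional $\mathbb{E}[\sigma(X/\beta)]$ and is transitive, so any cycle must come from mean differences interacting with the nonlinearity of $\sigma$. The paper exploits this at fixed $\beta$ and at first order in the mean shifts: it takes equal-mean $X,Y$ that are exactly indifferent, $\mathbb{E}[\sigma(X)]=\mathbb{E}[\sigma(Y)]$ (its condition \eqref{in:1}), but whose shift-derivatives $\mathbb{E}[e^{X}/(1+e^{X})^{2}]$ differ (condition \eqref{in:2} --- precisely the derivative-mismatch mechanism your cubic term detects), then uses monotonicity and an intermediate-value argument to place a third lottery $Z$ with small positive mean exactly on the indifference surface with $X$, so that the derivative gap forces $Y\succ Z$ strictly; a final perturbation makes all three relations strict. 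You instead work asymptotically in $t=1/\beta$: your leading term $-\tfrac{3t}{4}(\mathbb{E}[X]-\mathbb{E}[Y])$ is correct and telescopes around the triangle; the $t^{3}$ term gives, after tuning the mean-gaps at order $t^{2}$ against the third moments, a cyclic sum $\Omega$ of order $t^{5}$ equal (to leading order) to a sum of $2\times 2$ determinants pairing second and third moments, which is generically nonzero and of controllable sign; and the exact triangle decomposition plus an implicit-function solve of $\phi_{XY}=\phi_{YZ}=\phi_{ZX}$ (whose Jacobian in the two independent mean-gaps is $-\tfrac{3t}{4}$ times a fixed invertible matrix) pins each edge at exactly $\Omega/3<0$. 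The paper's proof is shorter, non-asymptotic and elementary; yours is heavier but buys more: it quantifies the size of the intransitivity (a curl of order $\beta^{-5}$) and shows that cycles are generic rather than hand-built.

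Two repairs are needed to make your sketch airtight, neither fatal. First, your parenthetical that adjusting the mean-gaps ``does not alter $\Omega$'' is false as stated; but it is harmless, because at the exact solution of the equalization equations each edge equals $\Omega/3$ evaluated \emph{at that solution}, and since the Newton correction to your order-$t^{2}$ guess is $O(t^{4})$, it perturbs $\Omega$ only at $O(t^{7})$, so the sign of the $t^{5}$ coefficient survives (you must also note that the $w^{5}$ term of $\sigma$ contributes to $\Omega$ only at $O(t^{7})$, since its moment parts telescope). Second, your construction yields cycles ``for all sufficiently large $\beta$,'' while the lemma concerns a given fixed $\beta>0$; this is closed by scaling: replacing $(X,Y,Z)$ by $(\lambda X,\lambda Y,\lambda Z)$ maps the preference with parameter $\beta$ to the one with parameter $\beta/\lambda$, so a large-$\beta$ cycle with bounded lotteries is the same as a fixed-$\beta$ cycle with small lotteries. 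With these two points added, your argument is a complete and correct alternative proof.
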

\begin{proof}
We will show that there are $X,\,Y,\,Z$ such that $X\succeq Y,$ $Z\succeq X$ but $Y\succ Z,$ i.e. that $\succeq$ is not transitive. The lotteries $X,\, Y,\,Z$ can be suitably perturbed to show that $\succ$ is not transitive as well.

Let $X,Y$ be two lotteries such that $\mathbb{E}[X]=\mathbb{E}[Y]=0,$
\begin{equation}
\label{in:1}
\mathbb{E}\left[ \frac{1}{1+e^{X}}\right]{}={}\mathbb{E}\left[ \frac{1}{1+e^{Y}}\right],
\end{equation}
and 
\begin{equation}
\label{in:2}
\mathbb{E}\left[ \frac{e^X}{(1+e^{X})^2}\right]{}>{}\mathbb{E}\left[ \frac{e^Y}{(1+e^{Y})^2}\right]
\end{equation}
Let $\bar{Z}$ be a lottery with $\mathbb{E}[\bar{Z}]=0$ and 
\begin{equation}
\label{in:3}
\mathbb{E}\left[ \frac{1}{1+e^{\bar{Z}}}\right]{}>{}\mathbb{E}\left[ \frac{1}{1+e^{X}}\right].
\end{equation}
First we establish the following 
\begin{claim} For any number $z_0>0$ there is $z\in(0,z_0)$ and a lottery $Z$ such that $\mathbb{E}[Z]=z,$ and  
\begin{equation}
\label{in:4}
\mathbb{E}\left[ \frac{1}{1+e^{z+Z}}\right]{}={}\mathbb{E}\left[ \frac{1}{1+e^{-z+X}}\right].
\end{equation}
\end{claim}
\begin{proof}
Consider functions $f(z) = \mathbb{E}\left[ \frac{1}{1+e^{2z+\bar{Z}}}\right]$ and 
$g(z) = \mathbb{E}\left[ \frac{1}{1+e^{-z+X}}\right].$ From \eqref{in:3} we have $f(0)>g(0).$ Moreover $f(z)$ is monotone function with $f'(0)<0,$ and $g(z)$ is monotone with $g'(0)>0$ as can be verified by taking derivatives. If lottery $\bar{Z}$ is chosen sufficiently close to $X,$ graphs of $f(z)$ and $g(z)$ will  intersect at some point in the interval $(0,z_0).$ For such point, call it $z,$ and lottery $Z = z+\bar{Z},$
equation \eqref{in:4} holds.
\end{proof}
Condition \eqref{in:1} means that $X\succeq Y.$ Consider function $f(z) = \mathbb{E}\left[ \frac{1}{1+e^{-z+X}}\right]$ and 
$g(z) = \mathbb{E}\left[ \frac{1}{1+e^{-z+Y}}\right].$ Conditions \eqref{in:1},\eqref{in:2} imply that there is $z_0>0$ such that for all $s\in(0,z_0),$
\[
\mathbb{E}\left[ \frac{1}{1+e^{-s+X}}\right] > \mathbb{E}\left[ \frac{1}{1+e^{-s+Y}}\right].
\]
Given this $z_0$ and using the claim we find $z\in(0,z_0)$ and lottery $Z,$ such \eqref{in:4} holds, which implies that $Z\succeq X.$ On the other hand
\[
\mathbb{E}\left[ \frac{1}{1+e^{-z+X}}\right] > \mathbb{E}\left[ \frac{1}{1+e^{-z+Y}}\right].
\]
Thus, 
\[
\mathbb{E}\left[ \frac{1}{1+e^{z+Z}}\right]{}>{}\mathbb{E}\left[ \frac{1}{1+e^{-z+X}}\right].
\]
By formula \eqref{X>Y} it implies that $Y\succ Z$ (recall that $\mathbb{E}[Z]=z,\, \mathbb{E}[Y]=0$), while \eqref{in:1} establishes that $X\succeq Y.$

\end{proof}

\end{section}

\begin{figure}[t]
\centering
\includegraphics[width=13cm]{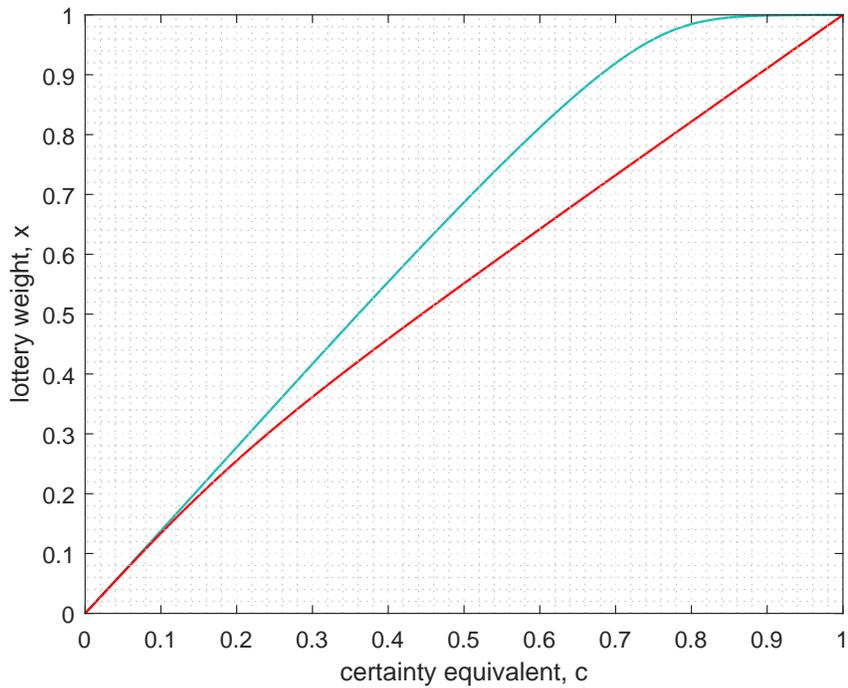}
\caption{Allais Paradox. Red line is the certainty equivalent for lottery $L(1,0:x).$   Blue  line represents  points $(c,x)$ for which lotteries $L(1,0:0.2x)$ are equivalent to $L(c,0:0.2).$ The region between two curves is set of points for which Allais paradox holds.
 RL(1) model has positive scale function $\Phi(u)=e^{u/0.1}$ and learning priors $U^0=\mathbb{E}[L(1,0:x)]=x.$ \label{fig:1}}

\end{figure}

\begin{figure}[t]
\centering
\includegraphics[width=13cm]{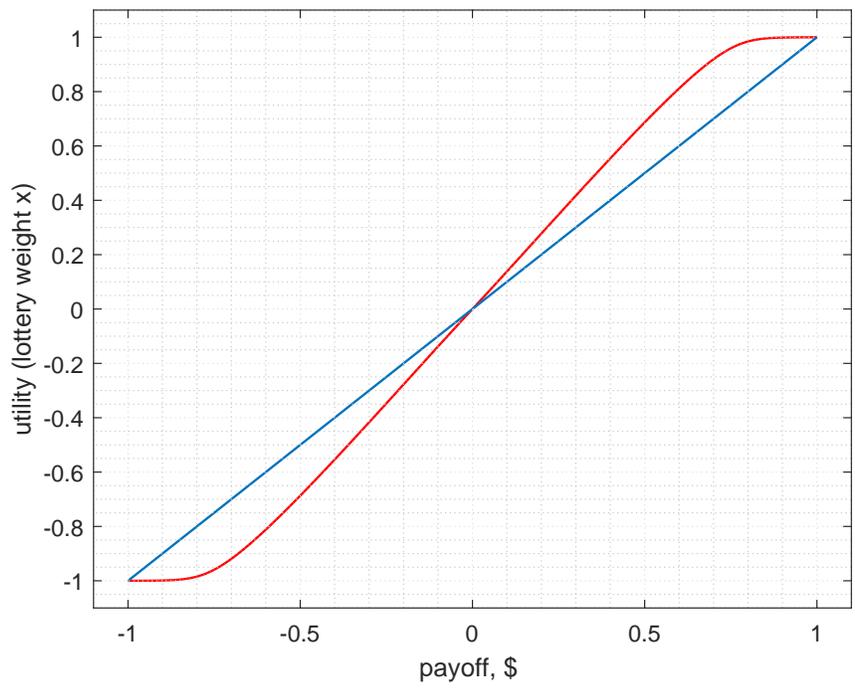}
\caption{Certainty equivalent curves for simple lotteries for losses and gains according to  EUT (blue) and the binary preferences from RL(1) model (red).  The function of positive scales $\Phi(u)=e^{u/.1}.$ \label{fig:2}}
\end{figure}

\begin{figure}[t]
\centering
\includegraphics[width=13cm]{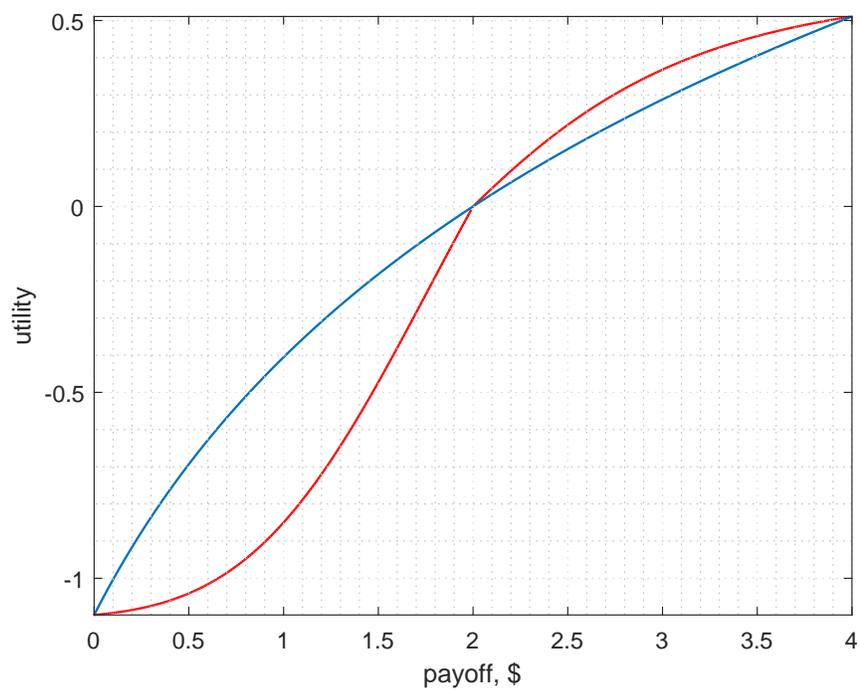}
\caption{Certainty equivalent curves for simple lotteries for losses and gains according to  EU (blue) and model \eqref{deu:1}-\eqref{max_prob} (red).  Utility $u(c) = \log(1+c)-\log3.$ Transition from losses to gains occurs at $c=2.$ The function of positive scales $\Phi(u)=e^{u/.2}.$ \label{fig:3}}
\end{figure}

\begin{figure}[t]
\centering
\includegraphics[width=13cm]{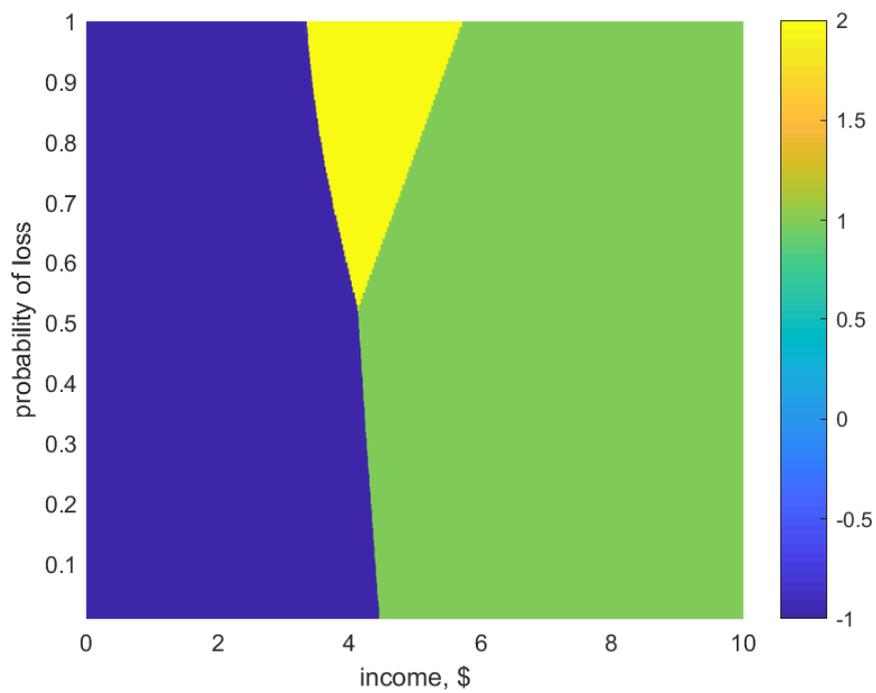}
\caption{Demand for insurance I. The figure shows fraction $a$ of purchased insurance (in color) for every pair of $(y,p)$ -- level of income and probability of loss, according to the model \eqref{deu:1}--\eqref{max_prob} with parameters $\alpha,\beta=+\infty,$ $\alpha/\beta{}={}0.4.$  \label{fig:4}}
\end{figure}

\begin{figure}[t]
\centering
\includegraphics[width=13cm]{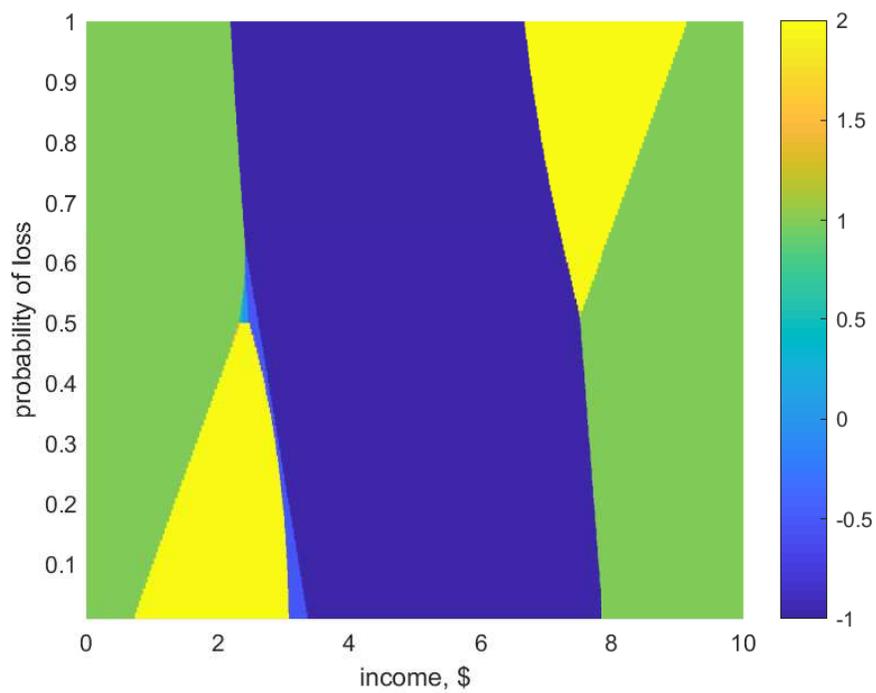}
\caption{ Demand for insurance II. The figure shows fraction $a$ of purchased insurance (in color) for every pair of $(y,p)$ -- level of income and probability of loss, according to the model \eqref{deu:1}--\eqref{max_prob} with parameters $\alpha=0.4,\beta=1.$ \label{fig:5}}
\end{figure}

\end{document}